\documentclass[pdflatex,sn-mathphys-num,iicol]{sn-jnl}

\usepackage{amsmath,amssymb,amsfonts}
\usepackage{float}
\usepackage{amsthm}

\theoremstyle{thmstyleone}
\newtheorem{theorem}{Theorem}[section]
\newtheorem{proposition}[theorem]{Proposition}

\usepackage{graphicx}
\usepackage{xcolor}
\usepackage{multirow}
\usepackage{array}
\usepackage{silence}
\usepackage{subcaption}
\usepackage{placeins} 
\usepackage{booktabs}

\usepackage{textcomp}
\usepackage{xurl}
\usepackage{microtype}          
\hypersetup{bookmarksdepth=2}

\usepackage{tikz}
\usetikzlibrary{positioning}

\title[Detectability Limits for Intra-Block Temporal Drift in Finite-Key QKD]{Detectability Limits for Intra-Block Temporal Drift in Finite-Key Entanglement-Based QKD}

\author[1]{\fnm{Rafael Duarte} \sur{Marcelino}}
\author[1,2]{\fnm{Julio Smanioto} \sur{Garcia}}
\author*[1]{\fnm{Matheus} \sur{Rufino}}\email{math.rufi@gmail.com}

\affil*[1]{\orgname{GWK}, \orgaddress{\city{Campinas}, \state{SP}, \country{Brazil}}}
\affil*[2]{\orgname{Instituto de Física Gleb Wataghin, Universidade Estadual de Campinas (UNICAMP)}, \orgaddress{\city{Campinas}, \state{SP}, \country{Brazil}}}

\abstract{We study the statistical detectability of intra-block temporal drift in finite-key entanglement-based quantum key distribution, with particular relevance to E91-type parameter estimation and monitoring. Drift is modeled as a mean-preserving Lipschitz perturbation of Bernoulli observables, capturing structured temporal variation that is invisible to global-average tests. For a block of size $n$ and confidence levels $(\alpha,\beta)$, we formulate a minimax hypothesis-testing problem and define the minimal detectable amplitude. We derive matching lower and upper bounds yielding $\delta_{\min}(n,\alpha,\beta)=\Theta(n^{-1/2})$: if $n\delta^2 \to 0$, no level-$\alpha$ procedure can guarantee nontrivial uniform power over the admissible drift class, whereas a calibrated CUSUM statistic detects drift at the matching scale. Explicit constants for linear, sinusoidal, and step profiles, together with simulations, confirm the predicted scaling collapse. The result quantifies a finite-block monitoring-resolution limit and is distinct from composable security certification.}

\keywords{quantum key distribution, finite-key analysis, temporal drift, non-stationarity, hypothesis testing, CUSUM}

\begin{document}

\maketitle


\section{Introduction}
\label{sec:introduction}

In entanglement-based QKD systems, the physical channel can fluctuate within a
single parameter-estimation window in ways that cancel on average. Polarization
drift, birefringence fluctuations, Raman noise, and detector
instabilities~\cite{Eraerds2010,Mao2018,Xu2020Review} can therefore produce
intra-block non-stationarity that is invisible to diagnostics based only on the
block-average error rate. Table~\ref{tab:physical_profiles} summarizes how
common physical mechanisms map qualitatively onto the canonical drift
profiles studied in this paper.

\begin{table}[t]
\centering
\captionsetup{justification=raggedright,singlelinecheck=false}
\caption{Qualitative mapping between physical drift mechanisms in
entanglement-based QKD and the canonical mean-preserving profiles studied
here. Precise calibration is system-dependent.}
\label{tab:physical_profiles}
\small
\setlength{\tabcolsep}{4pt}
\begin{tabular}{@{}>{\raggedright\arraybackslash}p{0.42\columnwidth}ll@{}}
\toprule
Physical mechanism & Scale & Profile \\
\midrule
Polarization / birefringence drift & ms--s & sinusoidal \\
Thermal / environmental drift & s--min & linear \\
Abrupt detector or alignment events & burst & step \\
\bottomrule
\end{tabular}
\end{table}

Finite-key analysis in quantum key distribution (QKD) quantifies the
secret-key length that can be certified from a finite number of signals
under composable security requirements. For entanglement-based protocols
such as E91~\cite{Ekert1991}, the relevant observables include the quantum
bit error rate (QBER) and, in device-independent settings, Bell-violation
statistics. Modern finite-size security methods, including smooth-entropy
techniques and the Entropy Accumulation Theorem (EAT), provide rigorous
security reductions beyond the i.i.d.\ regime~\cite{Devetak2005,Renner2008,Tomamichel2012,Dupuis2020}.
These advances establish security certification, but they do not by
themselves quantify monitoring resolution for temporal structure within a
single parameter-estimation block.

Although the detectability boundary for mean-preserving intra-block
drift is under-characterized in finite blocks, this regime is precisely
where global tests on $\hat e$ become uninformative: when temporal
variation integrates to zero, the problem reduces to structured signal
detection under finite-sample noise.
Figure~\ref{fig:erro_oscilante} illustrates a trajectory $e(t)$ whose
block average matches the baseline $e_0$ exactly, even though its temporal
structure deviates substantially from stationarity.

\begin{figure}[t]
\centering
\includegraphics[width=\columnwidth]{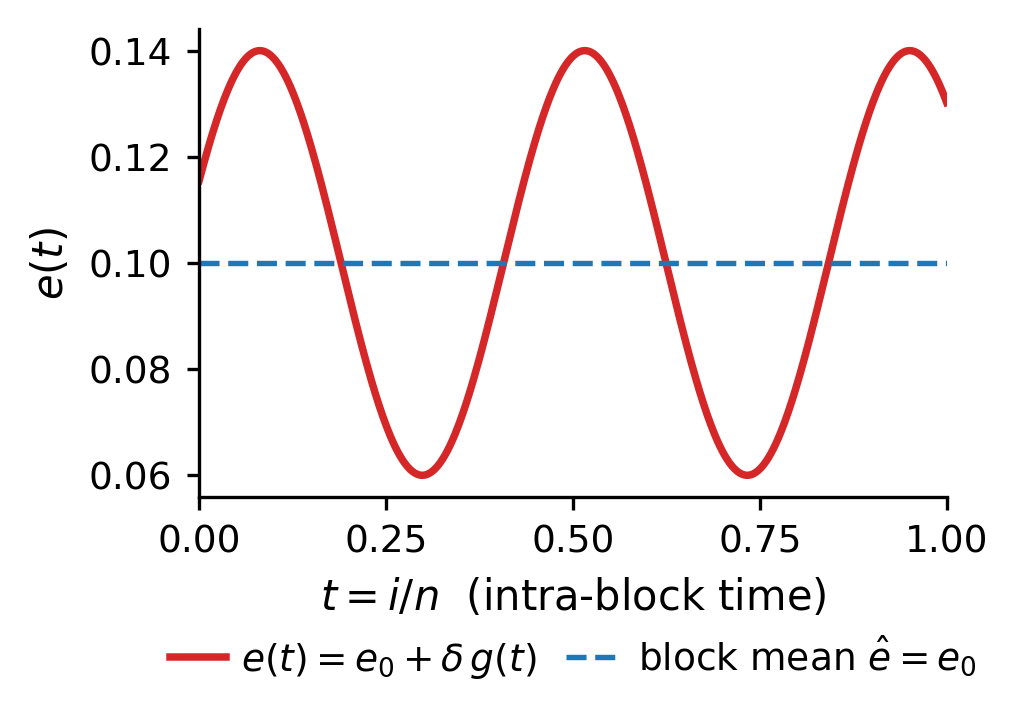}
\caption{Illustration of mean-preserving intra-block drift. The error
probability $e(t)$ oscillates within the block while its integral over the
block remains equal to the baseline $e_0$. A monitor that inspects only the
block average $\hat e$ cannot distinguish this trajectory from a stationary
one, motivating the partial-sum approach developed in Section~\ref{sec:methods}.}
\label{fig:erro_oscilante}
\end{figure}

\FloatBarrier%

Methods based on the Entropy Accumulation Theorem~\cite{Dupuis2020} and smooth
min-entropy~\cite{Renner2008,Tomamichel2012} certify composable security
\emph{conditional on} the device model remaining valid throughout the
estimation block. The present work addresses a complementary and prior
question: given finite data from a single block, can one detect that this
stationarity assumption has been violated? This is a physical diagnostic
problem, not a security reduction.

This paper addresses that gap by formulating intra-block drift detection as a
minimax hypothesis-testing problem for blockwise Bernoulli observations under
mean-preserving alternatives. The main question is: for block size $n$ and
error targets $(\alpha,\beta)$, what is the smallest amplitude
\[
\delta_{\min}(n,\alpha,\beta)
\]
that can be detected with controlled type-I and type-II errors?

Our contributions are threefold. First, we derive matching lower and upper
bounds showing
\[
\delta_{\min}(n,\alpha,\beta)=\Theta(n^{-1/2}).
\]
Second, we show that a calibrated plug-in CUSUM statistic~\cite{Page1954} attains this rate
over admissible drift classes. Third, we compute explicit signal constants for
linear, sinusoidal, and step profiles and validate the predicted scaling
through simulations.

The remainder of the paper is organized as follows.
Section~\ref{sec:methods} presents the statistical model, testing framework,
detectability analysis, and canonical drift profiles. Section~\ref{sec:results_discussion}
reports simulation evidence and operational implications. Section~\ref{sec:conclusion}
summarizes the conclusions.

\section{Methods}
\label{sec:methods}

The analysis proceeds in three stages. We first define the statistical model
and the class of mean-preserving drift alternatives that are invisible to
global-average diagnostics. We then show why mean-based statistics are
structurally blind to these alternatives, derive the cumulative partial-sum
instrument that captures the signal the average discards, and establish
matching lower and upper bounds on the minimal detectable amplitude. Finally,
we instantiate the universal $\Theta(n^{-1/2})$ rate on three canonical drift
profiles, computing the signal constant $A(g)$ that governs detection
sensitivity for each.

\subsection{Statistical Model}
\label{sec:statistical_model}

We consider a finite-key parameter-estimation block of size $n$ in an
entanglement-based QKD protocol. Let $X_i \in \{0,1\}$ denote the key-basis
error indicator at trial $i$, with $X_i=1$ for disagreement and $X_i=0$
otherwise. We associate each trial with the normalized intra-block time
coordinate
\[
t_i = \frac{i}{n}, \qquad i = 1,\dots,n.
\]

In E91-type implementations, the relevant error statistics depend on the
relative polarization alignment between remote analyzers. Let $\theta(t)$
denote a slowly varying misalignment parameter induced by environmental or
hardware fluctuations. For small perturbations around an operating point, a
first-order expansion yields
\[
e(t) = e_0 + \delta g(t),
\]
where $e_0 \in (0,1)$ is a baseline error probability, $\delta \ge 0$ is the
drift amplitude, and $g:[0,1]\to\mathbb{R}$ encodes the normalized temporal
profile. This Bernoulli abstraction captures the leading-order statistical
effect of polarization drift without requiring a detailed optical model.

We therefore observe a temporal sequence of binary errors and ask whether that
sequence is stationary or whether it contains structured intra-block variation
that the average error rate conceals. The hypotheses below formalize this
distinction between a constant error process and a mean-preserving drift
alternative.

\paragraph{Null hypothesis: stationarity.}

Under the stationary hypothesis $H_0$, observations are
independent Bernoulli trials with constant error probability:
\[
H_0:\qquad
X_i \sim \mathrm{Bernoulli}(e_0),
\quad i=1,\dots,n,
\]
for some fixed $e_0 \in (0,1)$.

The block-average estimator is
\[
\hat e = \frac{1}{n} \sum_{i=1}^n X_i.
\]

\paragraph{Alternative: mean-preserving drift.}

Under the alternative hypothesis $H_1$, the error probability
varies within the block according to
\[
e(t) = e_0 + \delta g(t),
\]
where:

\begin{itemize}
\item $\delta \ge 0$ is a drift amplitude parameter,
\item $g:[0,1]\to\mathbb{R}$ is a deterministic drift profile,
\item $\int_0^1 g(t)\,dt = 0$ (mean preservation),
\item $0 \le e_0 + \delta g(t) \le 1$ for all $t\in[0,1]$.
\end{itemize}

For later normalization, define the baseline variance
\[
\sigma_0^2 := e_0(1-e_0).
\]
This quantity governs the fluctuation scale of partial-sum processes
under stationarity.

Conditionally on $g$, observations are independent:
\[
\begin{aligned}
H_1:\qquad
X_i &\sim \mathrm{Bernoulli}\big(e_0 + \delta g(t_i)\big),\\
&\qquad i=1,\dots,n.
\end{aligned}
\]

Because of this constraint, $\mathbb{E}_{H_1}[\hat e] = e_0$: the
block-average error rate is identical under $H_0$ and $H_1$. Any test that
relies solely on $\hat e$ is therefore blind to the alternative by
construction.

The mean-preserving constraint models an operationally relevant regime in
which oscillatory or slowly varying perturbations average out over the block
and evade mean-based abort rules while still inducing systematic temporal
structure.

\paragraph{Drift class.}

To obtain non-degenerate detectability statements, we work with the class of
admissible drift functions
\[
\mathcal{G}(L,c)
=
\left\{
g :
\begin{array}{l}
\int_0^1 g(t)\,dt = 0, \\
\|g\|_\infty \le 1, \\
\mathrm{Lip}(g) \le L, \\
\|g\|_2 \ge c
\end{array}
\right\},
\]
for constants $L>0$ and $c>0$. The Lipschitz bound is a physical regularity
condition: oscillations faster than the monitoring resolution are
unobservable and should not determine the hardest-case alternative. The lower
bound on $\|g\|_2$ excludes drift profiles with negligible energy, which are
statistically indistinguishable from the null regardless of amplitude.
Together, these constraints ensure the detectability problem is non-degenerate.

With the model and admissible alternatives in place, the central question is:
which statistics can detect mean-preserving drift, and at what amplitude does
detection become possible?

\subsection{CUSUM Test and Detectability}
\label{sec:testing_detectability}

\paragraph{The fundamental difficulty.}

The mean-preserving constraint $\int_0^1 g(t)\,dt = 0$ forces
$\mathbb{E}_{H_1}[\hat e] = e_0$: the block-average QBER is identical under
$H_0$ and $H_1$. Any test relying solely on $\hat e$ is therefore structurally
blind to the alternative. This is not a limitation of a particular test design
but a consequence of the model: the global sum $\sum_{i=1}^n (X_i - e_0)$
carries no information about $\delta$ when $g$ is mean-preserving.

The signal is nevertheless present in the data; it is distributed across time
rather than stored in the total. Under $H_1$, early observations can lie
systematically above baseline while later ones compensate, producing a
detectable excursion in the cumulative record that cancels only at the very
end of the block. This motivates monitoring partial sums rather than the
global average.

\paragraph{Partial-sum process and signal analysis.}

Define centered variables
\[
Y_i := X_i - \hat e,
\]
and the associated partial-sum process
\begin{equation}
S_k := \sum_{i=1}^k Y_i,
\qquad k=1,\dots,n.
\label{eq:Sk_def}
\end{equation}

Under $H_0$, this process is bridge-like: $S_n = 0$ by construction, and
after normalization by $\sigma_0\sqrt{n}$ it behaves asymptotically as a
Brownian bridge, with fluctuations of order $O(1)$.

Under $H_1(\delta,g)$, with $t_i=i/n$ and $p_i=e_0+\delta g(t_i)$, we have
$\mathbb{E}_{H_1}[X_i]=p_i$ and $\mathbb{E}_{H_1}[\hat e]=e_0$, so
\begin{align}
\mathbb{E}_{H_1}[S_k]
&=
\sum_{i=1}^k \mathbb{E}_{H_1}[X_i-\hat e]
\approx
\delta \sum_{i=1}^k g(t_i)
\notag\\
&\approx
\delta n \int_0^{k/n} g(u)\,du.
\label{eq:Sk_mean_alt}
\end{align}
Thus $S_k$ acquires a deterministic drift whose shape is governed by the
cumulative integral of $g$. Define the cumulative drift profile
\begin{equation}
G(t) := \int_0^t g(u)\,du,
\qquad t\in[0,1],
\label{eq:G_def}
\end{equation}
and the signal functional
\begin{equation}
A(g) := \sup_{t\in[0,1]} |G(t)|.
\label{eq:Ag_def}
\end{equation}
From~\eqref{eq:Sk_mean_alt}, the maximum deterministic excursion of $S_k$
scales as $\delta n \cdot A(g)$, while null fluctuations scale as
$\sigma_0\sqrt{n}$. The effective signal-to-noise ratio is therefore
\[
\mathrm{SNR} \asymp \frac{\delta\sqrt{n}\,A(g)}{\sigma_0},
\]
which is of order one precisely when $\delta \asymp n^{-1/2}$. This is the
natural critical scale; Figure~\ref{fig:cusum_trajectory} contrasts
the partial-sum process under $H_0$ and $H_1$.

\begin{figure}[t]
\centering
\includegraphics[width=\columnwidth]{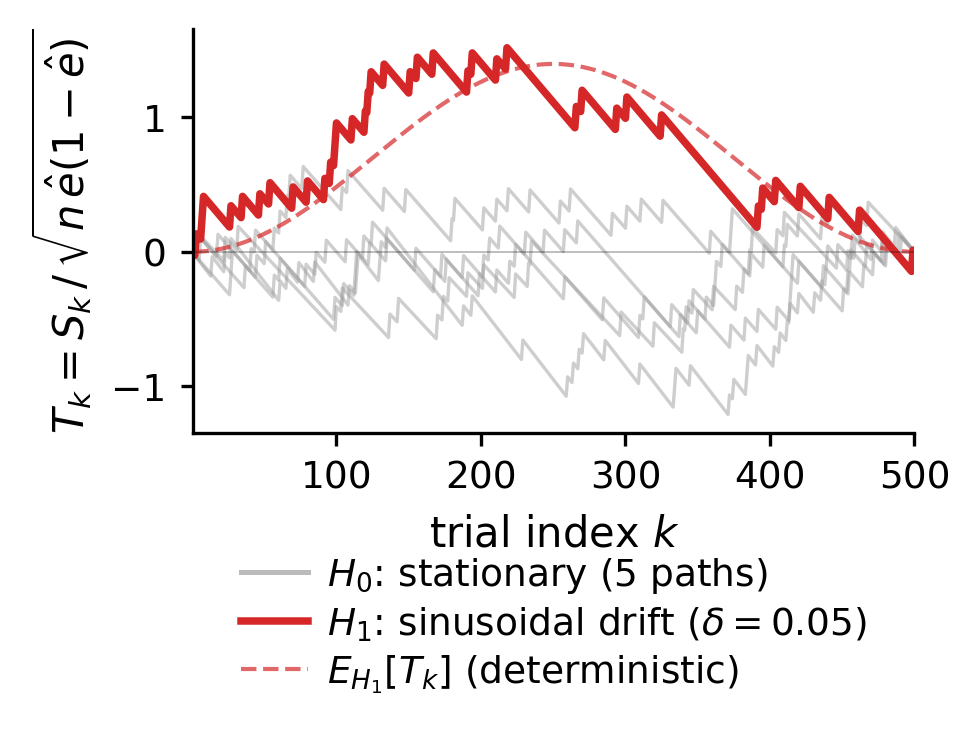}
\caption{Partial-sum process $S_k$ under $H_0$ (grey) and under a
mean-preserving sinusoidal alternative $H_1$ (red). Under $H_0$ the process
fluctuates as a bridge with $S_n=0$; under $H_1$ it develops a systematic
excursion of order $\delta n A(g)$ before returning to zero. The CUSUM
statistic $T_n$ captures the maximum of $|S_k|$, which separates the two
distributions once $\delta\sqrt{n}\,A(g)$ exceeds the null fluctuation scale.}
\label{fig:cusum_trajectory}
\end{figure}

\FloatBarrier%

\paragraph{CUSUM statistic.}

The signal analysis leads directly to the test statistic. Because $e_0$ is
unknown at monitoring time, we substitute $\hat e$ in both the centering and
the normalization:
\[
T_n :=
\max_{1 \le k \le n}
\left|
\frac{\sum_{i=1}^k (X_i - \hat e)}
{\sqrt{n \hat e (1-\hat e)}}
\right|.
\]
Under $H_0$, $T_n$ converges in distribution to the supremum of a Brownian
bridge, permitting calibration from the asymptotic distribution or Monte Carlo
samples. Under $H_1(\delta,g)$, the deterministic component of $T_n$ scales
as $\delta\sqrt{n}\,A(g)$, while null fluctuations remain $O(1)$.

\paragraph{Testing framework and detectability.}

A (possibly randomized) test is a measurable function
\[
\varphi_n:\{0,1\}^n \to [0,1],
\]
where $\varphi_n(x)$ is the probability of rejecting $H_0$ upon observing $x$.
We use the standard type-I/type-II error definitions:

\begin{align}
\alpha(\varphi_n)
&:= \sup_{e_0\in(0,1)} \ \mathbb{P}_{H_0}\!\left(\varphi_n=1\right),
\label{eq:typeI}\\
\beta(\varphi_n;\delta)
&:= \sup_{e_0\in(0,1)} \ \sup_{g\in\mathcal{G}}
\mathbb{P}_{H_1(\delta,g)}\!\left(\varphi_n=0\right).
\label{eq:typeII}
\end{align}

Fix confidence levels $(\alpha,\beta)\in(0,1)^2$. The minimal detectable
amplitude, subject to the error definitions \eqref{eq:typeI}--\eqref{eq:typeII}, is
\begin{equation}
\begin{aligned}
\delta_{\min}(n,\alpha,\beta) &:= \inf\Big\{\delta>0 \;:\; \exists \varphi_n\\
&\enspace \text{with } \alpha(\varphi_n)\le\alpha \text{ and } \beta(\varphi_n;\delta)\le\beta\Big\}.
\end{aligned}
\label{eq:delta_min_def}
\end{equation}

This is the smallest drift amplitude for which one can discriminate $H_1$
from $H_0$ on a block of length $n$ while controlling false alarms uniformly
over $\mathcal{G}(L,c)$. The SNR analysis above suggests
$\delta_{\min}(n,\alpha,\beta) = \Theta(n^{-1/2})$; the following two results
confirm this at both sides of the critical scale.

\paragraph{Upper bound: CUSUM achievability.}\label{sec:constructive_test}

Let $\tau_\alpha(n)$ denote a level-$\alpha$ calibration threshold for $T_n$
under $H_0$, obtained either from the asymptotic bridge distribution or from
finite-sample Monte Carlo calibration. The corresponding test is
\begin{equation}
\varphi_n := \mathbf{1}\{T_n > \tau_\alpha(n)\}.
\label{eq:test_def}
\end{equation}

\begin{proposition}[CUSUM upper bound]
\label{prop:cusum_upper}
Assume the admissible subclass under consideration satisfies
$A(g) \ge A_\star > 0$ for all $g$ in the class. Then, for fixed
$(\alpha,\beta) \in (0,1)^2$, there exists a constant
$C_{\alpha,\beta}(e_0,A_\star)$ such that the test \eqref{eq:test_def}
has power at least $1-\beta$ for all sufficiently large $n$ whenever
\[
\delta \ge \frac{C_{\alpha,\beta}(e_0,A_\star)}{\sqrt{n}}.
\]
\end{proposition}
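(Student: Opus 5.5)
We argue by a signal-plus-noise decomposition of the numerator of $T_n$ under $H_1(\delta,g)$. Write $p_i=e_0+\delta g(t_i)$ and $\bar p=n^{-1}\sum_i p_i$. For each $k$, split
\[
\sum_{i=1}^k (X_i-\hat e) = \underbrace{\sum_{i=1}^k (p_i-\bar p)}_{D_k} + \underbrace{\sum_{i=1}^k (X_i-p_i) - \tfrac{k}{n}\sum_{i=1}^n (X_i-p_i)}_{B_k}.
\]
The deterministic term $D_k$ carries the signal. Since $g$ is $L$-Lipschitz, the Riemann-sum error gives $\bigl|\sum_{i\le k} g(t_i) - nG(k/n)\bigr|\le L$, and $|n\bar p - ne_0| = \delta|\sum_i g(t_i)|\le \delta L$. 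Hence $D_k = \delta n G(k/n) + O(\delta L)$ uniformly in $k$. Choosing $k^\star$ with $|G(k^\star/n)|\ge A(g)-\|g\|_\infty/n \ge A_\star - 1/n$, we obtain
\[
\max_k |D_k| \ge \delta n A_\star - O(\delta(1+L)).
\]

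Next we control the noise and the normalization. The term $B_k$ is a centered bridge of independent Bernoulli$(p_i)$ increments, each with variance at most $1/4$. By Kolmogorov's (or Doob's) maximal inequality applied to the martingale $\sum_{i\le k}(X_i-p_i)$, together with Chebyshev for the endpoint,
\[
\mathbb P\Bigl(\max_k|B_k| > M\sqrt n\Bigr)\le \frac{C}{M^2},
\]
and we fix $M=M_\beta$ so that this is at most $\beta/2$. For the denominator, $\mathbb E[\hat e]=\bar p = e_0+O(\delta L/n)$ and $\mathrm{Var}(\hat e)\le 1/(4n)$. So with probability at least $1-\beta/2$ for large $n$, $\hat e\in[e_0/2,(1+e_0)/2]$, and then $\hat e(1-\hat e)\le 1/4$. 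This only requires an upper bound on $\hat e(1-\hat e)$, which is harmless; the lower bound keeps the statistic well defined.

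On the intersection of these events, which has probability at least $1-\beta$,
\[
T_n \ge \frac{\max_k|D_k| - \max_k|B_k|}{\sqrt{n/4}} \ge 2\bigl(\delta\sqrt n\,A_\star - O(\delta(1+L)/\sqrt n) - M_\beta\bigr).
\]
The threshold $\tau_\alpha(n)$ is bounded in $n$: it converges to the $(1-\alpha)$-quantile $q_{1-\alpha}$ of $\sup|{\rm Brownian\ bridge}|$ by the stated null convergence of $T_n$, and for large $n$ it is at most, say, $q_{1-\alpha}+1$. It therefore suffices to take
\[
\delta\sqrt n \ge C_{\alpha,\beta}:=\frac{(q_{1-\alpha}+1)/2 + M_\beta + 1}{A_\star},
\]
with $n$ large enough to absorb the $O(1/\sqrt n)$ corrections. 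This yields power at least $1-\beta$.

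The main subtlety is uniformity. The constant must not depend on $g$ beyond $A_\star$, and we get this because the Lipschitz constant only enters the lower-order $O(1/\sqrt n)$ terms, while the noise bound uses only the variance bound $1/4$. Dependence on $e_0$ enters only through the requirement that $\hat e$ stay bounded away from $\{0,1\}$, which determines how large $n$ must be; this explains why the constant is written as $C_{\alpha,\beta}(e_0,A_\star)$. A further care point is that $\tau_\alpha(n)$ depends on calibration. If it comes from finite-sample Monte Carlo at level $\alpha$ under $e_0$, its boundedness follows from the same maximal inequality applied under $H_0$, which gives $\tau_\alpha(n)\le 2/\sqrt{\alpha}+o(1)$ directly, without invoking the weak convergence.
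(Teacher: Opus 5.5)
Your proof is correct, and it takes a genuinely different route from the paper's under the alternative. The threshold step is shared: $\tau_\alpha(n)$ stays bounded because $T_n$ converges under $H_0$ to $\sup|B_0|$ (your $q_{1-\alpha}$ is the paper's $q_\alpha$).

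\textbf{The paper's route.} It appeals to the functional CLT and asserts $\mathbb{P}_{H_1}(T_n\le\tau_\alpha(n))\le\mathbb{P}\bigl(Z\le\tau_\alpha(n)-\delta\sqrt n\,A(g)/\sigma_0\bigr)+o(1)$ with $Z$ standard normal. Solving for $\delta$ gives the explicit constant $\sigma_0(q_\alpha+\Phi^{-1}(1-\beta))/A_\star$. This buys a sharper constant, because it tracks the true noise scale $\sigma_0$ and uses a Gaussian quantile rather than your Chebyshev-type $M_\beta\approx\sqrt{2/\beta}$. However, its $o(1)$ is pointwise in $g$ and in the sequence $\delta_n$, and the CLT for the $g$-dependent triangular array under $H_1$ is invoked rather than checked. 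Uniformity over the subclass and over all $\delta\ge C/\sqrt n$, which the supremum in $\beta(\varphi_n;\delta)$ requires, is therefore left implicit.

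\textbf{Your route.} You split the numerator exactly into the deterministic part $D_k$ and the bridge noise $B_k$. You control the Riemann-sum error deterministically through the Lipschitz bound, the noise by Kolmogorov's inequality plus Chebyshev, and the normalizer crudely by $\hat e(1-\hat e)\le 1/4$. These non-asymptotic bounds deliver the uniformity explicitly. One detail is worth stating: your $O(\delta(1+L)/\sqrt n)$ corrections vanish uniformly because admissibility gives $\delta\|g\|_\infty\le 1$, and mean preservation gives $\|g\|_\infty\ge 2A(g)\ge 2A_\star$, so $\delta\le 1/(2A_\star)$.

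\textbf{The price, and a fix.} Your constant is effectively the worst case $e_0=1/2$. Your closing remark that $e_0$ enters only through how large $n$ must be therefore describes your constant, not the paper's, where $e_0$ enters through the factor $\sigma_0$. You can recover that factor within your own framework. Use $\sum_i p_i(1-p_i)\le n\sigma_0^2+\delta L/2$ in Kolmogorov's inequality, and use consistency of $\hat e$, so that $\hat e(1-\hat e)\le\sigma_0^2(1+\epsilon)$ with high probability, instead of the bound $1/4$. This yields a constant of the form $\sigma_0\bigl((1+\epsilon)q_{1-\alpha}+2\sqrt{2/\beta}\bigr)/A_\star$ while keeping full uniformity.
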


Proof in Appendix~\ref{app:cusum_upper_proof}.

Once the deterministic excursion $\delta\sqrt{n}A(g)$ exceeds the null
fluctuation scale, the CUSUM statistic separates $H_1$ from the bridge-like
null process.

\paragraph{Lower bound: fundamental limitation.}

The SNR analysis identifies $\delta \asymp n^{-1/2}$ as the critical scale but
does not rule out a smarter test achieving better performance. The following
result shows that no such test exists.

\begin{theorem}[Lower bound for uniform detectability]
\label{thm:minimax_lower}
Let $\mathcal{G}(L,c)$ be the drift class introduced in
Section~\ref{sec:statistical_model}. Fix $\alpha \in (0,1)$ and a sequence
$\delta_n$ satisfying $n\delta_n^2 \to 0$.

Then, for any sequence of tests $(\varphi_n)$ with
$\alpha(\varphi_n) \le \alpha$, one has
\[
\inf_{g \in \mathcal{G}(L,c)}
\mathbb{P}_{H_1(\delta_n,g)}\!\left(\varphi_n=1\right)
\le \alpha + o(1).
\]

In particular, no level-$\alpha$ procedure can achieve power uniformly bounded
away from $\alpha$ over $\mathcal{G}(L,c)$ when $\delta_n=o(n^{-1/2})$.
\end{theorem}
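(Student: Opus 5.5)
We reduce the uniform-power statement to a two-point comparison. It suffices to exhibit, for each $n$, a single admissible profile $g^\star\in\mathcal{G}(L,c)$ and a single baseline $e_0\in(0,1)$ such that the product laws $P_0^n=\bigotimes_{i}\mathrm{Bernoulli}(e_0)$ and $P_1^n=\bigotimes_i \mathrm{Bernoulli}(e_0+\delta_n g^\star(t_i))$ satisfy $\mathrm{TV}(P_0^n,P_1^n)\to 0$. Indeed, for any test with $\alpha(\varphi_n)\le\alpha$ the type-I error at this particular $e_0$ is at most $\alpha$, hence
\[
\inf_{g}\mathbb{P}_{H_1(\delta_n,g)}(\varphi_n=1)\le \mathbb{E}_{P_1^n}[\varphi_n]\le \mathbb{E}_{P_0^n}[\varphi_n]+\mathrm{TV}(P_0^n,P_1^n)\le \alpha+o(1),
\]
using only that $\varphi_n$ takes values in $[0,1]$. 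No mixture prior is needed, since the statement concerns the infimum over $g$, and a fixed alternative is the weakest competitor.

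The first step is to fix the witness. Take $e_0=1/2$ (any interior value works) and $g^\star(t)=a\cos(2\pi t)$ with $a=\min\{1,L/(2\pi)\}$. Then $g^\star$ has zero mean, $\|g^\star\|_\infty\le1$, and $\mathrm{Lip}(g^\star)\le L$. We also need $\|g^\star\|_2=a/\sqrt2\ge c$. If $c$ exceeds this value, we instead pick any fixed element of $\mathcal{G}(L,c)$, assuming the class is nonempty, which is implicit in the statement since otherwise the infimum is vacuous. Because $\|g^\star\|_\infty\le 1$ and $\delta_n\to0$, the success probabilities $p_i=e_0+\delta_n g^\star(t_i)$ lie in $[e_0/2,(1+e_0)/2]$ for large $n$, so the model is valid.

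The second step bounds the divergence. We use Pinsker together with tensorization of KL, or alternatively the $\chi^2$ route through the Hellinger affinity:
\[
\mathrm{TV}(P_0^n,P_1^n)^2\le \tfrac12\mathrm{KL}(P_1^n\|P_0^n)=\tfrac12\sum_{i=1}^n \mathrm{KL}(\mathrm{Ber}(p_i)\|\mathrm{Ber}(e_0)).
\]
Each summand is bounded by the $\chi^2$ divergence $(p_i-e_0)^2/\sigma_0^2=\delta_n^2 g^\star(t_i)^2/\sigma_0^2$. Summing and using $\|g^\star\|_\infty\le1$ gives $\mathrm{KL}\le n\delta_n^2/\sigma_0^2\to0$. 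Equivalently, $\mathrm{KL}\approx n\delta_n^2\|g^\star\|_2^2/(2\sigma_0^2)$ by a Riemann sum, which exhibits the same $n\delta^2$ scaling as the SNR heuristic. Hence $\mathrm{TV}\to0$ and the claim follows.

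There is no real analytic obstacle here. The only delicate point is definitional. The type-I error in \eqref{eq:typeI} is a supremum over $e_0$, while the power is an infimum over $g$ at some $e_0$. We must check that comparing at a single fixed $e_0$ is legitimate: the size constraint holds pointwise in $e_0$, and the displayed inequality concerns one alternative. We must also check that $\mathcal{G}(L,c)$ contains the chosen profile, which may require the compatibility condition $c\le \sup\{\|g\|_2: g\in\mathcal{G}(L,\cdot)\}$. We will state this nonemptiness assumption explicitly.
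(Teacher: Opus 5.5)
Your proposal is correct and follows essentially the same route as the paper: you tensorize the KL divergence over the independent Bernoulli trials, bound it by a multiple of $n\delta_n^2$, apply Pinsker, and conclude with $\mathbb{E}_{P_1}[\varphi_n]\le \mathbb{E}_{P_0}[\varphi_n]+\mathrm{TV}$ at a fixed $e_0$. The differences are only refinements. You drop the paper's mixture $P_\Pi$ over $\{g_1,\dots,g_M\}$, which is indeed unnecessary since the claim concerns the infimum over $g$, and you replace the paper's second-order Taylor expansion (with its $O(\varepsilon^3)$ remainder and Riemann-sum step) by the exact bound $\mathrm{KL}\le\chi^2=\delta_n^2 g(t_i)^2/\sigma_0^2$, so you need only $\|g\|_\infty\le 1$ and any fixed element of a nonempty $\mathcal{G}(L,c)$.
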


Proof in Appendix~\ref{app:lower_bound_proof}.

Together, Theorem~\ref{thm:minimax_lower} and Proposition~\ref{prop:cusum_upper}
establish that CUSUM is rate-optimal.

\begin{theorem}[Power scaling of the detectable amplitude]
\label{thm:power_asymptotics}
Under the regularity conditions of Theorem~\ref{thm:minimax_lower} and
Proposition~\ref{prop:cusum_upper}, and for fixed
$(\alpha,\beta)\in(0,1)^2$,
\[
\delta_{\min}(n,\alpha,\beta)
=
\Theta\!\left(n^{-1/2}\right).
\]
The CUSUM statistic attains the minimax lower bound up to a constant; no
level-$\alpha$ test can achieve nontrivial power $1{-}\beta$ at amplitudes
$\delta$ with $n\delta^2\to 0$.
\end{theorem}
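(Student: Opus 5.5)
The plan is to combine the two preceding results: Proposition~\ref{prop:cusum_upper} gives the upper bound $\delta_{\min}=O(n^{-1/2})$, and Theorem~\ref{thm:minimax_lower} gives the lower bound $\delta_{\min}=\Omega(n^{-1/2})$. The only extra work is translating the statements about sequences in Theorem~\ref{thm:minimax_lower} into a statement about the infimum in \eqref{eq:delta_min_def}, and checking that the hypotheses of the two results apply to the same drift class.

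\textbf{Upper bound.} First I would fix the class. I work with $\mathcal{G}(L,c)$ and verify that $A(g)$ is bounded below on it, so that Proposition~\ref{prop:cusum_upper} applies. The key facts are that $G(0)=G(1)=0$ and $G'=g$. If $\sup|G|$ were small while $\|g\|_2\ge c$ and $\mathrm{Lip}(g)\le L$, then integrating by parts against a smooth test function, or using an interpolation inequality of the form $\|g\|_2^2\lesssim \|G\|_\infty^{2/3}\,\mathrm{Lip}(g)^{4/3}$ (up to boundary terms controlled by $\|g\|_\infty\le1$), would give a contradiction. This yields $A_\star=A_\star(L,c)>0$. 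Alternatively, I would simply take $A(g)\ge A_\star$ as one of the ``regularity conditions'' the statement refers to.

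With $A_\star$ in hand, Proposition~\ref{prop:cusum_upper} shows that the test \eqref{eq:test_def} has level $\alpha$ and type-II error at most $\beta$ once $\delta\ge C_{\alpha,\beta}/\sqrt n$ and $n\ge n_0$. Hence $\delta_{\min}(n,\alpha,\beta)\le C_{\alpha,\beta}n^{-1/2}$ for $n\ge n_0$. Uniformity over $e_0$ in \eqref{eq:typeI}--\eqref{eq:typeII} is delicate as $e_0\to\{0,1\}$, so I would state the constant for $e_0$ restricted to a compact subinterval, consistent with the dependence $C(e_0,A_\star)$.

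\textbf{Lower bound by contradiction.} Suppose $\liminf_n \sqrt n\,\delta_{\min}(n,\alpha,\beta)=0$. Then along a subsequence there are amplitudes $\delta_n$ with $n\delta_n^2\to0$, together with tests $\varphi_n$ satisfying $\alpha(\varphi_n)\le\alpha$ and $\beta(\varphi_n;\delta_n)\le\beta$. To make this step clean, I would note that achievability at $\delta$ should imply achievability at larger $\delta'$, or else bypass monotonicity by choosing $\delta_n$ within a factor $2$ of the infimum. The type-II bound gives
\[
\inf_{g}\mathbb{P}_{H_1(\delta_n,g)}(\varphi_n=1)\ge 1-\beta.
\]
Theorem~\ref{thm:minimax_lower}, applied along the subsequence (padding with trivial tests elsewhere), bounds the same quantity by $\alpha+o(1)$. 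So $1-\beta\le\alpha$, which is a contradiction whenever $\alpha+\beta<1$. I would add $\alpha+\beta<1$ as the standing nontriviality assumption, since otherwise the randomized test $\varphi\equiv\alpha$ makes $\delta_{\min}=0$.

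\textbf{Main obstacle.} I expect the hardest part to be reconciling the hypotheses. Theorem~\ref{thm:minimax_lower} is stated over all of $\mathcal{G}(L,c)$, whereas Proposition~\ref{prop:cusum_upper} requires $A(g)\ge A_\star$. Establishing this lower bound on $A$ from the Lipschitz and $L^2$ constraints is the one genuinely new estimate, and I would try the interpolation argument first. The secondary issue is the uniformity in $e_0$, which I would handle by the compact restriction described above.
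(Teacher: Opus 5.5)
Your proposal is correct and follows the paper's route: the paper gives no separate proof of Theorem~\ref{thm:power_asymptotics} and obtains it simply by combining Proposition~\ref{prop:cusum_upper} (upper bound) with Theorem~\ref{thm:minimax_lower} (lower bound). Your additional details are sound and fill in what the paper leaves implicit: you pick $\delta_n$ near the infimum to pass from sequences to $\delta_{\min}$; you add the assumption $\alpha+\beta<1$, which the statement omits but genuinely needs (otherwise the randomized test $\varphi\equiv\alpha$ gives $\delta_{\min}=0$); and you derive $A(g)\ge A_\star>0$ on $\mathcal{G}(L,c)$ instead of assuming it, which even has an elementary proof (near a maximizer of $|g|$ there is an interval of length $c/(2L)$ in $[0,1]$ on which $|g|\ge c/2$ with fixed sign, so $A(g)\ge c^2/(8L)$).
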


\subsection{Canonical Drift Profiles}
\label{sec:canonical_drift_models}

The rate $\Theta(n^{-1/2})$ is universal across the admissible drift class,
but the prefactor depends on the signal functional $A(g)$, which varies with
the shape of the drift. To make this operational, we compute $A(g)$
explicitly for three physically motivated profiles and derive the quantitative
ordering of CUSUM sensitivity that the simulations will verify.

Throughout, we enforce the normalization
\begin{equation}
\|g\|_\infty = 1,
\label{eq:g_norm_infty}
\end{equation}
so that the amplitude parameter $\delta$ has the same meaning across models.
We also impose the mean-preserving constraint
$\int_0^1 g(t)\,dt=0$.

The three normalized profiles used in the simulations are
\[
\begin{aligned}
g_{\mathrm{lin}}(t) &= 2t-1,\\
g_{\mathrm{sin}}(t) &= \sin(2\pi t),\\
g_{\mathrm{step}}(t) &=
\begin{cases}
-1,&0\le t<\tfrac{1}{2},\\
+1,&\tfrac{1}{2}\le t\le 1.
\end{cases}
\end{aligned}
\]

\paragraph{Predicted ordering.}

The relevant model-dependent quantity is the CUSUM signal constant
\[
A(g) = \sup_{t\in[0,1]} \left| \int_0^t g(u)\,du \right|.
\]
The signal functional $A(g)$ measures the maximum cumulative imbalance that the
drift profile can produce within a block. A profile with larger $A(g)$ creates
a more visible excursion in the partial sums, and is therefore easier to
detect at fixed $(n, \delta)$.
For the three normalized profiles used in the simulations, the constants are
\begin{equation}
\begin{aligned}
A(g_{\mathrm{lin}}) &= \frac{1}{4},\\
A(g_{\mathrm{sin}}) &= \frac{1}{\pi}\approx 0.318,\\
A(g_{\mathrm{step}}) &= \frac{1}{2}.
\end{aligned}
\label{eq:canonical_A_values}
\end{equation}
Derivations are given in Appendix~\ref{app:canonical_derivations}. Since
$1/4 < 1/\pi < 1/2$, the deterministic excursion induced by step drift is
largest, followed by sinusoidal drift, with linear drift producing the
smallest excursion. Figure~\ref{fig:constantes_sinal} plots the three
cumulative profiles $G(t)$ and the signal constant $A(g)$ for each.

\begin{figure}[t]
\centering
\includegraphics[width=\columnwidth]{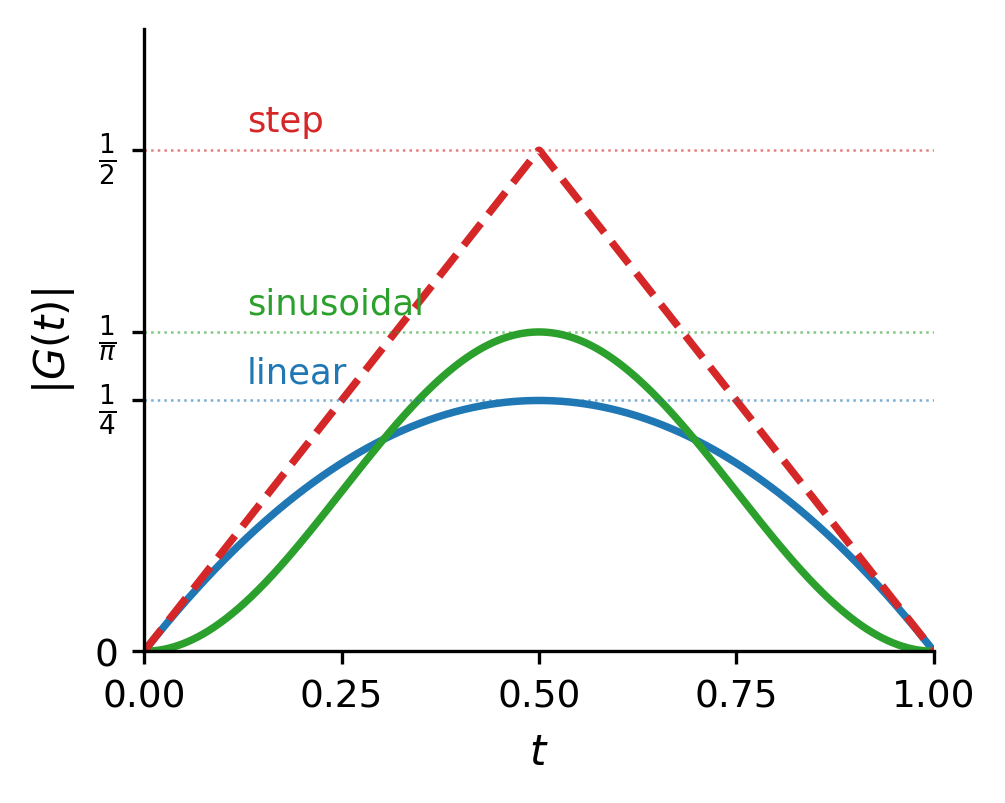}
\caption{Cumulative drift profiles $G(t)=\int_0^t g(u)\,du$ for the three
normalized drift families. The signal constant $A(g)=\sup_t|G(t)|$ equals the
maximum vertical excursion of each curve: $1/4$ for linear (blue), $1/\pi$
for sinusoidal (green), and $1/2$ for step (red). The ordering $A(g_{\mathrm{step}})
> A(g_{\mathrm{sin}}) > A(g_{\mathrm{lin}})$ directly predicts the sensitivity
ordering of the CUSUM test.}
\label{fig:constantes_sinal}
\end{figure}

\FloatBarrier%

Because detectability at fixed $(n,\delta)$ is governed by the effective signal-to-noise ratio
\[
\delta \sqrt{n}\, A(g),
\]
we predict the qualitative ordering

\[
\text{step} \;>\; \text{sinusoidal} \;>\; \text{linear}
\]

in terms of statistical sensitivity.

Equivalently, for fixed $(n,\alpha,\beta)$,
\[
\delta_{\min}^{\text{step}}
<
\delta_{\min}^{\text{sin}}
<
\delta_{\min}^{\text{lin}}.
\]

\subsection{Scope and Limitations}
\label{sec:scope_limitations}

The present analysis assumes conditional independence within each monitoring block. Dependent observation models are not considered here.

\clearpage
\section{Results and Discussion}
\label{sec:results_discussion}

\subsection{Simulation Setup}
\label{sec:simulation_framework}

The theoretical predictions of Section~\ref{sec:methods} make three testable
claims: the $n^{-1/2}$ scaling of $\delta_{\min}$, the scaling collapse
under the rescaled variable $\delta\sqrt{n}$, and the model-dependent
ordering of detectability
$\widehat{\delta}_{\min}^{\mathrm{step}}<
\widehat{\delta}_{\min}^{\mathrm{sin}}<
\widehat{\delta}_{\min}^{\mathrm{lin}}$.
This section verifies each claim numerically.

All simulations are performed under the model of
Section~\ref{sec:statistical_model} and the CUSUM-based procedure of
Section~\ref{sec:constructive_test}.

Five block sizes ($n = 250$, $500$, $1000$, $2000$, $4000$), three baseline
error rates ($e_0 = 0.02$, $0.05$, $0.10$), significance level $\alpha=0.05$,
and target power $1-\beta=0.8$ are used throughout. Three normalized drift
profiles are tested: linear, sinusoidal, and step
(Section~\ref{sec:statistical_model}).

For each $(n,e_0,g)$, admissibility requires
\[
0 \le e_0 + \delta g(t) \le 1
\quad \forall t\in[0,1],
\]
so the maximal amplitude is
\begin{equation}
\delta_{\max}=\min(e_0,\,1-e_0).
\label{eq:delta_max}
\end{equation}

For fixed $(n,e_0,g)$, define
\begin{multline}
\widehat{\delta}_{\min}(n,\alpha,\beta,g) :=\\
\inf\!\left\{\delta :
\widehat{\mathrm{Power}}(n,e_0,g,\delta) \ge 1-\beta\right\}.
\label{eq:delta_min_hat}
\end{multline}

\subsection{Power Curves and Scaling Collapse}
\label{sec:simulation_results}

The empirical results test both the leading-order scaling with
$\delta\sqrt{n}$ and the model-dependent refinement through $A(g)$. We analyze
linear, sinusoidal, and step drift profiles across baseline error rates
$e_0 \in \{0.02,0.05,0.10\}$.

Figure~\ref{fig:delta_min_scaling} summarizes the key empirical result:
$\widehat{\delta}_{\min}$ plotted against block size $n$ on logarithmic
scales for the three drift families ($e_0=0.05$, $\alpha=0.05$, target
power $1{-}\beta=0.8$). All three curves run parallel to the reference line
$\propto n^{-1/2}$, confirming the predicted scaling rate. The vertical
ordering---step below sinusoidal below linear---reflects the hierarchy
$A(g_{\mathrm{step}}) > A(g_{\mathrm{sin}}) > A(g_{\mathrm{lin}})$ stated
in Theorem~\ref{thm:power_asymptotics}.

\begin{figure}[t]
\centering
\includegraphics[width=\columnwidth]{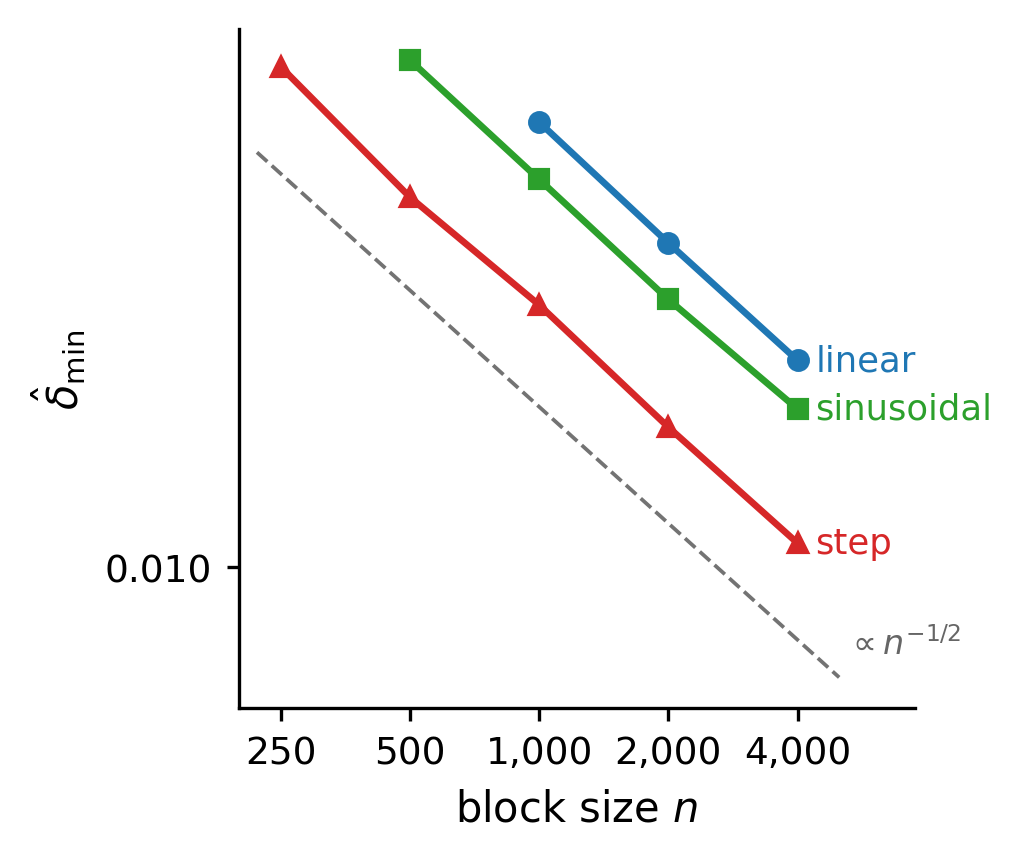}
\caption{Empirical detectability threshold $\widehat{\delta}_{\min}$ vs.\
block size $n$ (log--log scale) at baseline error rate $e_0=0.05$.
All three drift families---linear (blue circles), sinusoidal (green squares),
step (red triangles)---decay parallel to the dashed reference line
$\propto n^{-1/2}$, confirming Theorem~\ref{thm:power_asymptotics}.
The vertical separation encodes the model-dependent signal constant $A(g)$:
step drift concentrates energy at the block boundary, yielding the largest
$A(g)$ and hence the smallest $\widehat{\delta}_{\min}$.}
\label{fig:delta_min_scaling}
\end{figure}

\FloatBarrier%

We next test the rescaled variable $\delta\sqrt{n}$.

Power curves rescaled by $\delta\sqrt{n}$ collapse onto a single master
curve for each drift family across all block sizes $n\in\{250,\dots,4000\}$,
confirming that $\delta\sqrt{n}$ is the natural scaling variable. The
horizontal offset between families is consistent with the signal-constant
ratio $A(g_{\mathrm{step}})/A(g_{\mathrm{lin}})=2$ derived in
Section~\ref{sec:canonical_drift_models}. Residual spread at the transition
region reflects finite-sample effects that diminish as $n$ grows.

\begin{figure*}[t]
\centering
\includegraphics[width=\textwidth]{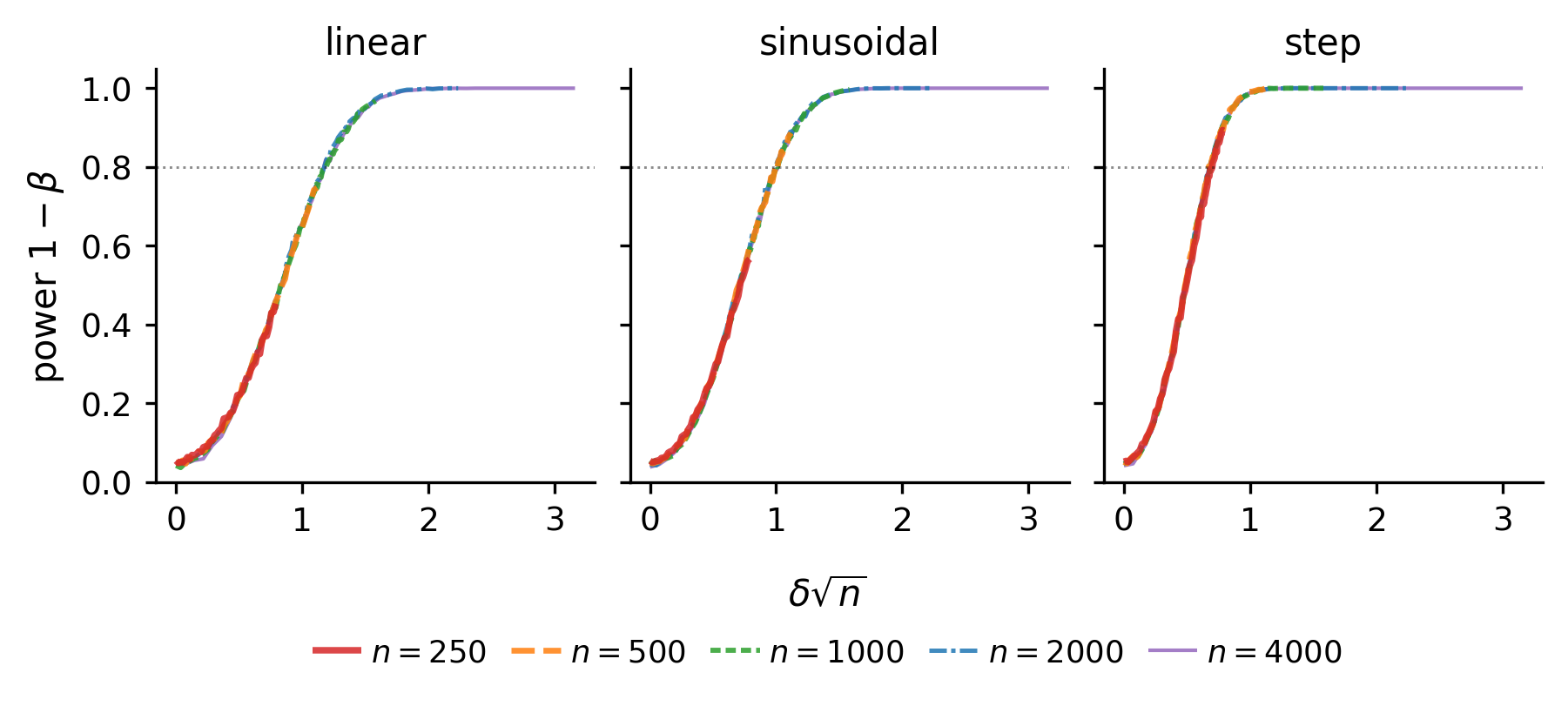}
\caption{Scaling collapse: power as a function of the rescaled amplitude
$\delta\sqrt{n}$ for $e_0=0.05$, shown separately for each drift family.
Each curve corresponds to a block size $n\in\{250,500,1000,2000,4000\}$,
distinguished by color and line style (see legend). Curves align closely
onto a single master curve in each panel, confirming that $\delta\sqrt{n}$
is the universal scaling variable. Residual spread near the transition
region, most visible for $n=250$ (red, plotted on top), reflects
finite-sample effects that diminish as $n$ grows. The horizontal shift
between panels reflects the model-dependent signal constant $A(g)$: the
step profile (right) transitions at smaller $\delta\sqrt{n}$ than the
linear profile (left), consistent with
$A(g_{\mathrm{step}})/A(g_{\mathrm{lin}})=2$.}
\label{fig:scaling_collapse}
\end{figure*}

\FloatBarrier%

Across all configurations, power increases monotonically with $n$, and the
transition from low to high power sharpens as the block size grows. The
amplitude required to achieve fixed power decreases approximately as
$n^{-1/2}$, consistent with the theoretical prediction.

Table~\ref{tab:delta_min_summary} summarizes this transition through the
empirical detectability threshold $\widehat{\delta}_{\min}$ defined in
\eqref{eq:delta_min_hat}. The table makes two aspects explicit: first,
for fixed $(e_0,g)$ the threshold decreases with $n$; second, for fixed
$(n,e_0)$ the ordering
\[
\widehat{\delta}_{\min}^{\mathrm{step}}
<
\widehat{\delta}_{\min}^{\mathrm{sin}}
<
\widehat{\delta}_{\min}^{\mathrm{lin}}
\]
appears whenever all three models attain the target power. Entries marked
``--'' correspond to configurations in which power $0.8$ was not reached
within the admissible interval $[0,\delta_{\max}]$.

\begin{table*}[t]
\centering
\caption{Empirical detectability threshold
$\widehat{\delta}_{\min}(n,\alpha,\beta,g)$ at target power $1-\beta=0.8$.
Each entry is the smallest simulated amplitude whose interpolated power curve
reaches the target.}
\label{tab:delta_min_summary}
\small
\setlength{\tabcolsep}{5pt}
\begin{tabular}{cccccccccc}
\toprule
\multirow{2}{*}{$n$}
& \multicolumn{3}{c}{$e_0=0.02$}
& \multicolumn{3}{c}{$e_0=0.05$}
& \multicolumn{3}{c}{$e_0=0.10$} \\
\cmidrule(lr){2-4} \cmidrule(lr){5-7} \cmidrule(lr){8-10}
& Linear & Sinusoidal & Step
& Linear & Sinusoidal & Step
& Linear & Sinusoidal & Step \\
\midrule
250
& -- & -- & --
& -- & -- & 0.0443
& -- & 0.0862 & 0.0590 \\
500
& -- & -- & --
& -- & 0.0452 & 0.0301
& 0.0719 & 0.0609 & 0.0417 \\
1000
& -- & -- & 0.0140
& 0.0376 & 0.0317 & 0.0218
& 0.0507 & 0.0430 & 0.0292 \\
2000
& 0.0171 & 0.0143 & 0.0098
& 0.0263 & 0.0222 & 0.0152
& 0.0353 & 0.0298 & 0.0206 \\
4000
& 0.0118 & 0.0099 & 0.0069
& 0.0185 & 0.0160 & 0.0107
& 0.0252 & 0.0215 & 0.0147 \\
\bottomrule
\end{tabular}
\end{table*}

Figure~\ref{fig:scaling_collapse} shows the scaling collapse for each drift
family. Across all drift families and baseline error rates, curves corresponding to
different block sizes align closely when plotted against $\delta\sqrt{n}$.
The ordering across drift shapes is consistent with the refinement through
$A(g)$ computed in Section~\ref{sec:canonical_drift_models}.

Although the scaling exponent $1/2$ is universal across drift families,
the transition region from low to high power differs systematically
between models.

Consistent with the signal functional $A(g)$,
the empirical ordering of sensitivity is:

\begin{itemize}
\item \textbf{Step drift} exhibits the sharpest transition and achieves
high power at the smallest amplitudes, reflecting its maximal cumulative imbalance.

\item \textbf{Sinusoidal drift} requires slightly larger amplitude than
step drift but remains more detectable than linear drift.

\item \textbf{Linear drift} is the least sensitive among the three,
requiring the largest amplitude to achieve fixed power.
\end{itemize}

These observations are consistent with the signal constants~\eqref{eq:canonical_A_values}

\[
A(g_{\mathrm{lin}}) = \frac{1}{4},
\quad
A(g_{\mathrm{sin}}) = \frac{1}{\pi},
\quad
A(g_{\mathrm{step}}) = \frac{1}{2},
\]

confirming that $A(g)$ captures the leading model dependence of the finite-
sample signal strength in the CUSUM statistic.

All simulations were implemented in Python with vectorized sampling routines
and fixed random seeds. Raw power tables and threshold estimates are stored as
structured data files for post-processing.

\subsection{Operational Implications}
\label{sec:discussion_block}

\begin{figure}[t]
\centering
\includegraphics[width=\columnwidth]{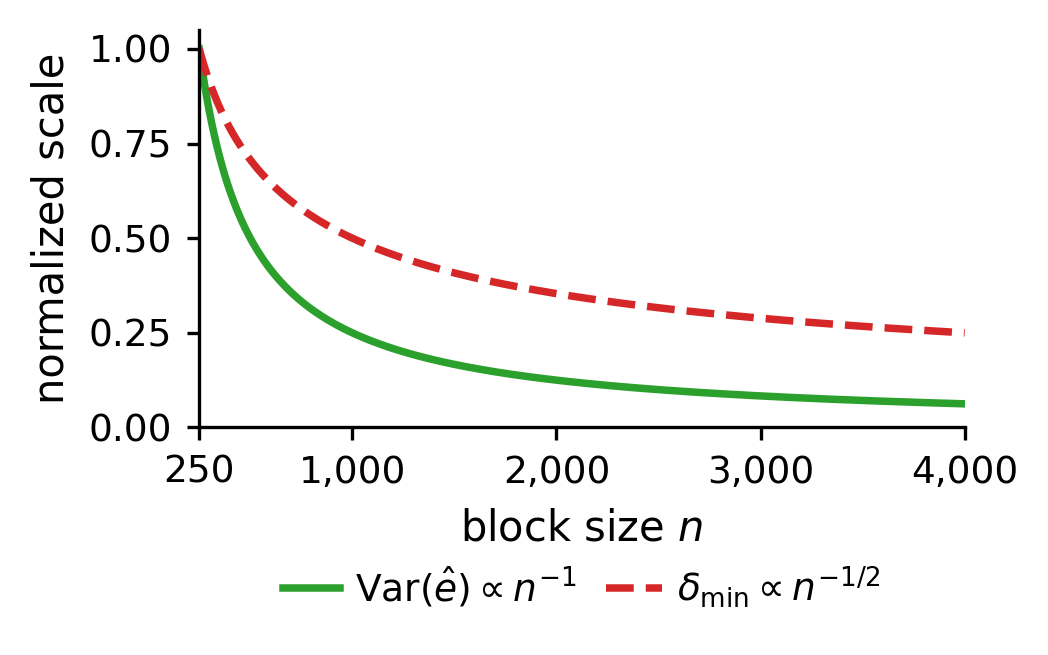}
\caption{Statistical trade-off imposed by block size $n$. As $n$ increases,
estimation variance $\mathrm{Var}(\hat e)=\sigma_0^2/n$ decreases (improving
key-rate stability), while the minimal detectable drift amplitude
$\delta_{\min}\propto n^{-1/2}$ also decreases --- but only at the
square-root rate. The two curves represent incommensurable gains: a system
operator who doubles $n$ to halve estimation noise also coarsens the temporal
resolution threshold by a factor of $\sqrt{2}$.}
\label{fig:tradeoff_bloco}
\end{figure}

The detectability rate
\[
\delta_{\min}(n,\alpha,\beta) = \Theta(n^{-1/2})
\]
imposes a statistical constraint on finite-key parameter monitoring.
Figure~\ref{fig:tradeoff_bloco} illustrates both scaling behaviors
simultaneously.

In finite-key QKD implementations, block size $n$ is chosen
to balance key-rate efficiency against statistical uncertainty.
Larger blocks reduce variance in parameter estimates and improve
key-rate stability.
The same increase in block size, however, reduces sensitivity to intra-block
temporal drift: within the single-block testing problem studied here, drifts
with amplitude $\delta = o(n^{-1/2})$ cannot be reliably distinguished from
stationary fluctuations at fixed confidence levels.

Formally, the variance of the empirical error rate satisfies
\[
\mathrm{Var}(\hat e) = \frac{\sigma_0^2}{n},
\]
while the minimal detectable structured deviation scales as
\[
\delta_{\min}(n,\alpha,\beta) \asymp \frac{1}{\sqrt{n}}.
\]
Thus increasing $n$ reduces estimation noise while improving drift resolution
only at the square-root rate: precision in the empirical rate $\hat e$
improves linearly in $n$, whereas the smallest resolvable structured drift
shrinks only as $n^{-1/2}$.

This limitation is statistical rather than algorithmic: it follows from the
vanishing Kullback--Leibler separation between the null hypothesis and
admissible alternatives whenever $n\delta^2 \to 0$. As emphasised in the
Introduction, this detectability boundary is complementary to composable
security certification rather than a constraint on it.

From a system-design perspective,
the block size implicitly determines both
key-rate stability and drift detectability.
If operational drift amplitudes are expected to scale
with environmental or hardware fluctuations,
then monitoring sensitivity must be evaluated
relative to the $n^{-1/2}$ statistical resolution threshold.

In particular, if expected environmental fluctuations
produce effective drift amplitudes of order $\delta_{\mathrm{phys}}$,
then detectability requires
\[
\delta_{\mathrm{phys}} \gtrsim n^{-1/2}.
\]
This condition provides an operational criterion
for selecting monitoring block sizes in deployed systems.

These findings motivate block partitioning strategies, adaptive monitoring
across blocks, or multi-scale diagnostics when slow drift is expected.
Such refinements can change constants and operational trade-offs, but they do
not alter the single-block square-root scaling derived here.

\subsection*{Synthesis}

Taken together, the simulations are consistent with the three predictions of
Section~\ref{sec:methods}: the $n^{-1/2}$ decay of $\widehat{\delta}_{\min}$,
the collapse of power curves under $\delta\sqrt{n}$, and the model-dependent
ordering reproduced quantitatively by the signal constants $A(g)$ with no
free parameters.

\section{Conclusion}
\label{sec:conclusion}

We studied intra-block drift detection in finite-key entanglement-based QKD
through a minimax hypothesis-testing framework. The null model is stationary,
while the alternative introduces mean-preserving temporal structure that is
invisible to global-average statistics.

The main conclusion is that the minimal detectable amplitude~\eqref{eq:delta_min_def} satisfies
\[
\delta_{\min}(n,\alpha,\beta) = \Theta(n^{-1/2}).
\]
The lower bound shows that when $n\delta^2 \to 0$, no level-$\alpha$ test can
guarantee nontrivial uniform power over the admissible drift class. The upper
bound shows that a calibrated CUSUM statistic detects drift at the matching
scale under the stated regularity conditions.

For finite-key QKD, this rate quantifies a monitoring-resolution limit rather
than a security limitation. Larger blocks stabilize parameter estimates, but
they also restrict the smallest intra-block drift that can be diagnosed from
finite data. Explicit signal constants for linear, sinusoidal, and step drift
profiles, together with the simulation results, confirm the predicted scaling
and its model dependence.

Future work may extend the analysis to dependent observations, multi-block
monitoring schemes, and adaptive diagnostics that remain compatible with
finite-key operational constraints.

From an operational standpoint, the result establishes that intra-block
temporal drift is not merely a theoretical concern: it defines a statistical
resolution limit that is present in any finite-key implementation, regardless
of the monitoring algorithm employed. Practitioners selecting block sizes
should treat $n^{-1/2}$ not only as a key-rate efficiency parameter, but as a
bound on their system's ability to diagnose channel instability from within a
single block.

\appendix
\section{Technical Derivations}

\subsection{Lower-Bound Proof Details}
\label{app:lower_bound_proof}

\begin{proof}[Proof of Theorem~\ref{thm:minimax_lower}]
Let $P_0$ denote the joint distribution under $H_0$ and $P_g$ the
distribution under a fixed alternative $g \in \mathcal{G}(L,c)$.

Because observations are independent,
\[
\mathrm{KL}(P_g \,\|\, P_0)
=
\sum_{i=1}^n
\mathrm{KL}\!\left(
\mathrm{Bern}(e_0+\delta g(t_i))
\,\big\|\,
\mathrm{Bern}(e_0)
\right).
\]

For Bernoulli parameters $p = e_0 + \varepsilon$ sufficiently close to $e_0$,
\[
\mathrm{KL}(\mathrm{Bern}(p)\|\mathrm{Bern}(e_0))
=
\frac{\varepsilon^2}{2 e_0(1-e_0)}
+ O(\varepsilon^3).
\]

With $\varepsilon = \delta_n g(t_i)$ and $\|g\|_\infty \le 1$,
\[
\mathrm{KL}(P_g \,\|\, P_0)
=
\frac{\delta_n^2}{2 e_0(1-e_0)}
\sum_{i=1}^n g(t_i)^2
+ O(n\delta_n^3).
\]

Because $g$ belongs to $\mathcal G(L,c)$ and is Lipschitz with constant $L$,
the Riemann-sum approximation yields
\[
\left|
\frac{1}{n}\sum_{i=1}^n g(t_i)^2
-
\int_0^1 g(t)^2 dt
\right|
\le
\frac{C(L)}{n}
\]
for some constant $C(L)>0$, hence
\[
\sum_{i=1}^n g(t_i)^2
=
n\int_0^1 g(t)^2dt + O(1).
\]
Since $\|g\|_2\ge c>0$, this preserves the $n\delta_n^2$ scaling.

Therefore, for sufficiently small $\delta_n$,
\[
\mathrm{KL}(P_g \,\|\, P_0)\le C n\delta_n^2
\]
for some constant $C>0$ depending on $e_0$ and $c$.

Now choose a finite subset $\{g_1,\dots,g_M\}\subset\mathcal G(L,c)$ and define
\[
P_\Pi=\frac{1}{M}\sum_{j=1}^M P_{g_j}.
\]
By convexity of KL divergence,
\[
\mathrm{KL}(P_\Pi \,\|\, P_0)
\le
\frac{1}{M}\sum_{j=1}^M \mathrm{KL}(P_{g_j}\|\!P_0)
\le
C n\delta_n^2.
\]
Pinsker’s inequality gives
\[
\mathrm{TV}(P_\Pi,P_0)^2
\le
\frac{1}{2}\mathrm{KL}(P_\Pi\|\!P_0)
\le
C'n\delta_n^2.
\]
If $n\delta_n^2\to 0$, then $\mathrm{TV}(P_\Pi,P_0)\to 0$.

For any test $\varphi$ of size at most $\alpha$,
\[
\frac{1}{M}\sum_{j=1}^M P_{g_j}(\varphi=1)
\le
\alpha + \mathrm{TV}(P_\Pi,P_0).
\]
Hence
\[
\inf_{1\le j\le M} P_{g_j}(\varphi=1)\le \alpha+o(1),
\]
which implies
\[
\inf_{g\in\mathcal G(L,c)}P_g(\varphi=1)\le \alpha+o(1).
\]
Applying this to $\varphi=\varphi_n$ proves the theorem.
\end{proof}

\subsection{Proof of Proposition~\ref{prop:cusum_upper}}
\label{app:cusum_upper_proof}

\begin{proof}[Proof of Proposition~\ref{prop:cusum_upper}]
\textbf{Step~1.}
The threshold is of order $O(1)$. Under $H_0$, the centred partial-sum
process $n^{-1/2}\sigma_0^{-1}S_{\lfloor nt\rfloor}$
converges weakly in $C[0,1]$ to a Brownian bridge by Donsker's
theorem~\cite{Billingsley1999}. Consequently
\[
T_n \xrightarrow{d} \sup_{t\in[0,1]}|B_0(t)|,
\]
where $B_0$ is a standard Brownian bridge. For each $\alpha\in(0,1)$ the
level-$\alpha$ quantile $q_\alpha := \sup\{x : P(\sup|B_0|>x)\ge\alpha\}$
is finite and depends only on $\alpha$. Thus $\tau_\alpha(n)=q_\alpha+o(1)$,
i.e.\ $\tau_\alpha(n)=O(1)$.

\textbf{Step~2 (signal under $H_1$).}
Under $H_1(\delta,g)$ the deterministic component of $S_k$ is
$\mu_k = \delta n G(k/n) + O(1)$, where $G(t)=\int_0^t g(u)\,du$~\cite{Basseville1993}.
The stochastic component remains of order $\sigma_0\sqrt{n}$. Therefore
\begin{align*}
T_n &= \frac{\max_k|S_k|}{\sqrt{n\hat{e}(1-\hat{e})}} \\
    &\ge \frac{\delta n A(g) + O(\sqrt{n})}{\sigma_0\sqrt{n}(1+o(1))} \\
    &= \frac{\delta\sqrt{n}\,A(g)}{\sigma_0} + O(1).
\end{align*}

\textbf{Step~3 (achievability).}
By the functional CLT, for any fixed $u>0$,
\begin{align*}
&\mathbb{P}_{H_1(\delta,g)}\!\bigl(T_n \le \tau_\alpha(n)\bigr) \\
&\quad\le
\mathbb{P}\!\left(Z \le \tau_\alpha(n) - \frac{\delta\sqrt{n}\,A(g)}{\sigma_0}\right)
+ o(1),
\end{align*}
where $Z$ is a standard normal random variable.
Setting the right-hand side equal to $\beta$ and solving for $\delta$ gives
\[
\delta \ge
\frac{\sigma_0\bigl(q_\alpha + \Phi^{-1}(1-\beta)\bigr)}{A(g)\sqrt{n}}
=: \frac{C_{\alpha,\beta}(e_0,A_\star)}{\sqrt{n}},
\]
where we used $A(g)\ge A_\star>0$.
Hence $\mathbb{P}_{H_1}(T_n>\tau_\alpha(n))\ge 1-\beta$ for all sufficiently
large $n$, completing the achievability argument~\cite{Lorden1971,Moustakides1986}.
\end{proof}

\subsection{Canonical Drift Derivations}
\label{app:canonical_derivations}

For completeness, we summarize the drift profiles used in the simulations:
\[
\begin{aligned}
\tilde g_{\mathrm{lin}}(t) &= 2t-1,\\
\tilde g_{\mathrm{sin}}(t) &= \sin(2\pi t),\\
\tilde g_{\mathrm{step}}(t) &=
\begin{cases}
-1,&0\le t<\tfrac{1}{2},\\
+1,&\tfrac{1}{2}\le t\le 1.
\end{cases}
\end{aligned}
\]
All are mean-preserving and satisfy $\|g\|_\infty=1$.

Their cumulative integrals are:
\[
G_{\mathrm{lin}}(t)=\int_0^t(2u-1)\,du=t^2-t,
\]
\[
G_{\mathrm{sin}}(t)=\int_0^t\sin(2\pi u)\,du
=\frac{1-\cos(2\pi t)}{2\pi},
\]
\[
G_{\mathrm{step}}(t)=
\begin{cases}
-t,&0\le t<\tfrac{1}{2},\\
t-1,&\tfrac{1}{2}\le t\le 1.
\end{cases}
\]

Therefore:
\[
\begin{aligned}
A(\tilde g_{\mathrm{lin}})&=\sup_t|G_{\mathrm{lin}}(t)|=\frac{1}{4},\\
A(\tilde g_{\mathrm{sin}})&=\sup_t|G_{\mathrm{sin}}(t)|=\frac{1}{\pi},\\
A(\tilde g_{\mathrm{step}})&=\sup_t|G_{\mathrm{step}}(t)|=\frac{1}{2}.
\end{aligned}
\]

For the step profile, smoothing near $t=\tfrac{1}{2}$ within the admissible
class preserves the same leading-order scaling conclusions.

\section{Simulation Details}
\label{app:simulation_details}

Threshold calibration uses $M_0 = 10{,}000$ Monte Carlo samples under
$H_0$ for each pair $(n, e_0)$. Power is estimated from $M_1 = 5{,}000$
samples under $H_1(\delta, g)$, yielding Monte Carlo standard errors of
approximately $\pm 0.01$ on all power estimates. The $\delta$ grid is
refined near the transition region $\delta \asymp n^{-1/2}$; empirical
thresholds $\widehat{\delta}_{\min}$ are obtained by linear interpolation
over this grid. Raw power tables and threshold estimates are stored as
structured data files for post-processing.

\backmatter{}

{
\sloppy
\bmhead{Acknowledgements}

The authors acknowledge institutional support from GWK.
\par}

\section*{Declarations}

\bmhead{Funding}

The authors declare that no funding was received for this research.

\bmhead{Competing interests}

The authors declare no competing interests.

\bmhead{Ethics approval and consent to participate}

Not applicable.

\bmhead{Consent for publication}

Not applicable.

\bmhead{Data availability}

The simulation data generated and analyzed in this study, including the
raw power tables and threshold estimates underlying all figures, are
available from the corresponding author on reasonable request.

\bmhead{Code availability}

The source code used to generate all simulations and figures is available
from the corresponding author on reasonable request.

\bmhead{Author contributions}

Following the CRediT taxonomy: Rafael Duarte Marcelino --- conceptualization,
methodology, formal analysis, software, validation, investigation,
visualization, writing (original draft).
Julio Smanioto Garcia --- software, validation, writing (review and
editing).
Matheus Rufino --- supervision, project administration, writing (review
and editing).
All authors read and approved the final manuscript.

\bibliography{references}

\end{document}